\documentclass[a4paper, 10pt, conference]{IEEEtran}

\usepackage{amsmath,amssymb,graphicx}
\usepackage[noadjust]{cite}
\usepackage{filecontents}
\usepackage{amsthm}
{
	\theoremstyle{plain}
	\newtheorem{assumption}{Assumption}
}
\usepackage{xcolor}
\usepackage{ulem}
\usepackage{hyperref}
\usepackage{lipsum}
\usepackage{epstopdf}

\usepackage{caption}
\usepackage{subcaption}
\IEEEeqnarraydefcol{myp}{\parbox[c]{0.5in}}{}
\IEEEoverridecommandlockouts
\newtheorem{theorem}{Theorem}

\usepackage{wrapfig}

\providecommand{\keywords}[1]{\textbf{\textit{Index terms---}} #1}

\begin{document}
	\title{Multivariable Newton-Based Extremum Seeking with Bounded Update Rates}
\author{
	\IEEEauthorblockN{\fontsize{12pt}{13pt}\selectfont
		Farzaneh Karimi\IEEEauthorrefmark{1},
		Mohsen Mojiri\IEEEauthorrefmark{1},
		Mohammadali Ghadiri-Modarres\IEEEauthorrefmark{2},
		Azadeh Mohammadpour\IEEEauthorrefmark{3}
	}
	\IEEEauthorblockA{\small\IEEEauthorrefmark{1}
		Department of Electrical and Computer Engineering,
		Isfahan University of Technology, Isfahan, Iran
	}
	\IEEEauthorblockA{\small \IEEEauthorrefmark{2}
		Department of Electrical Engineering,
		Arak University of Technology, Arak, Iran
	}
	\IEEEauthorblockA{\small\IEEEauthorrefmark{3}
		Department of Electrical Engineering,
		Shahid Beheshti University, Tehran, Iran
	}
}
\newtheorem{thm}{Theorem}
\newtheorem{prf}[thm]{Proof} 
\newtheorem{definition}{Definition}
\twocolumn[
\begin{@twocolumnfalse}
	\maketitle
	\vspace{2mm}
	\noindent\rule{\textwidth}{0.5pt}
	\vspace{2mm}
\begin{abstract}
	We propose a multivariable Newton-based extremum seeking scheme, incorporating known bounds on update rates. The design extends the recent bounded extremum seeking scheme to the Newton-based approach, which makes the convergence rate independent of the unknown Hessian by estimating and inverting the Hessian matrix. As a vital part of the Newton-based algorithm, we design an appropriate demodulation matrix to generate an estimate of the Hessian in an average sense. The local exponential stability of the bounded extremum seeking algorithm is proven for general multivariable static maps using averaging analysis. In comparison with previous practical stability results, it guarantees convergence to a neighborhood of the extremum point with an exact convergence rate exhibiting exponential decay. Subsequently, the main stability result is established for the Newton-based bounded extremum seeking approach under a weak Hessian coupling assumption. Simulation results demonstrate the advantage of the proposed approach over bounded extremum seeking by assigning equal, desired convergence rates to all parameters using the Newton approach.
\end{abstract}
	\keywords
	{Extremum Seeking, Newton-based Extremum Seeking, Bounded Update Rates, Stability, Exponential Stability.}
	\vspace{2mm}
	\noindent\rule{\textwidth}{0.5pt}
	\vspace{2mm}
	\end{@twocolumnfalse}
]
	\section{Introduction}\label{sec1}
	The classical extremum-seeking (ES) approach is gradient-based and typically employs sinusoidal perturbation signals to estimate the gradient of an unknown static map. Despite its broad theoretical development and practical applications  \cite{wang2025distributed, scheinker2021extremum, tabatabaei2025voltage}, the convergence rate of gradient-based ES schemes generally depends on the local curvature of the map, which, in the multivariable setting, is characterized by its Hessian matrix. Since this curvature is usually unknown, the resulting convergence rate cannot generally be prescribed a priori. Moreover, the parameter-update rates are not inherently bounded in conventional gradient-based ES schemes.
	
	Bounded update rates are important in practical applications. For example, in source-localization problems, physical and actuator constraints may impose explicit limits on the rate at which the control parameters can change \cite{ghadiri2016new}. To address this issue, \cite{scheinker_extremum_2014} introduced ES with bounded update rate (Bounded ES) in which the argument of the sinusoidal demodulation functions is used to confine the unknown map value, thereby guaranteeing bounded parameter-update rates. The stability of this approach was established using the results in \cite{kurzweil1987limit,sussmann1992new}, yielding global practical stability. More recently, \cite{zhu2022extremum} employed a time-delay approach to averaging to obtain practical-stability results and explicit bounds on the seeking error. However, these results are restricted to one- and two-variable problems and require restrictive prior information, including bounds on the extremum point, the extremum value, and the Hessian matrix.
	
	Although Bounded ES schemes constrain the parameter-update rates, their convergence rates remain sensitive to the unknown curvature of the map. To reduce this dependence, Newton-based ES schemes have been developed, including the approach proposed in \cite{moase_newton-like_2010}. By incorporating an estimate of the inverse Hessian, Newton-based ES can compensate for the effect of the local curvature on the convergence rate. However, combining this convergence-rate advantage with explicitly bounded parameter-update rates remains challenging. In particular, a suitable Hessian-estimation mechanism and a corresponding stability analysis are required for a Newton-based bounded ES scheme.

	The main contributions of this paper are as follows. First, we establish local exponential stability for the averaged Bounded ES system using averaging analysis. Compared with the practical-stability results in \cite{scheinker_extremum_2014}, the proposed analysis provides an explicit exponential decay estimate and a quantitative convergence rate for the averaged error dynamics, as well as the corresponding neighborhood to which the original oscillatory system converges.  Second, we design a Hessian estimator based on an appropriately constructed demodulation matrix and incorporate its inverse into the bounded ES update law. Third, we establish local exponential stability of the resulting Newton-based bounded ES scheme under a weak Hessian-coupling condition. The proposed approach allows equal desired convergence rates to be assigned to the optimization variables while respecting prescribed bounds on their update rates.
	
	The remainder of this paper is organized as follows. Section \ref{ProblemStatement} introduces the gradient-based bounded ES scheme and establishes its local exponential convergence properties. Section \ref{newton} presents the proposed Newton-based bounded ES scheme and provides its stability analysis. Section \ref{Sim} illustrates the theoretical results through a simulation example. Finally, Section \ref{CoN} summarizes the main conclusions.
\subsection{Notation and definitions}
Throughout the article, vectors and matrices are denoted in boldface. For clarity and conciseness, we use the following compact notation:
\begin{enumerate}
	\item $\sin(\boldsymbol{\scalebox{0.85}{${\Omega}$}}t+\boldsymbol{\phi})=[\sin({\omega_1}t+{\phi_1}),\sin({\omega_2}t+{\phi_2}),...,\sin({\omega_n}t+{\phi_n})]^T$. Where $\boldsymbol{\scalebox{0.85}{${\Omega}$}}$ is a frequency vector with elements $\omega_i$ and $\boldsymbol{\phi}$ is defined as a vector containing the phase shifts $\phi_i$. The same is true for the cosine function.
	\item If $\boldsymbol{\mu}$ is a vector with elements $\mu_i$, the diagonal matrix corresponding to the vector $\boldsymbol{\mu}$ is defined as $\operatorname{diag} (\boldsymbol{\mu})=\operatorname{diag} (\mu_1, \mu_2,..., \mu_n)$.
\end{enumerate}
Also Table 1 shows the notations adopted in this paper.
\begin{table}[h!]
	\begin{center}
		\caption{Notation }
		\label{tab:table1}
		\begin{tabular}{c|l} 
			\hline
			\textbf{Notation} &\textbf{Definition}\\
			\hline
			\hline \\
			
			\shortstack{$\mathbb{R}$\\ \\ \hspace{1cm}}&\shortstack[l]{Set of real numbers\vspace{0.25cm}} \\ 
			
			\shortstack{$\mathbb{R}^n$\\ \\ \hspace{1cm}} &\shortstack[l]{Set of n-dimensional real vectors\vspace{0.25cm}} \\
			
			\shortstack{$x^T$\\ \\ \hspace{1cm}}&\shortstack{The transpose of the vector $x\in\mathbb{R}^n$\vspace{0.25cm}}\\
			\shortstack{$	\vert|.	\vert| $ \\ \\ \hspace{1cm}} & \shortstack{ Standard Euclidean norm\vspace{0.25cm}} \\ 
			
			\shortstack{	$\nabla V$ \\ \\ \hspace{1cm}}& \shortstack{ Gradient of real-valued differentiable function $V$ \vspace{0.25cm}}\\
			\shortstack{	$f_{av}(x)$ \\ \\ \\ \\ \\ \\ \\ \\ \hspace{1cm}}& \shortstack[l]{ Well-defined average of periodic function $f(x,t)$, defined \\as $f_{av}(x) = \frac{1}{T} \int_{0}^{T} f(x,t) \, dt$, where $T$ is the period time \vspace{0.25cm}}
		\end{tabular}
	\end{center}
\end{table}
\section{Extremum Seeking with bounded update rate }\label{ProblemStatement}
\subsection{Introduction to the Existing Method \cite{scheinker_extremum_2014}}
ES scheme in multivariable case considers applications
in which the goal is to maximize (or minimize) the scalar
output $y \in \mathbb{R}$ of an unknown and convex nonlinear static map
$y = J(\boldsymbol{\theta})$ by varying the input vector $\boldsymbol{\theta}=\left[\theta_1,...,\theta_n\right]^T\in \mathbb{R}^n$. 
In maximum/ minimum seeking problem, there exists  $\boldsymbol{\theta^*}\in \mathbb{R}^n$ such that
\begin{align}
	\nabla J(\boldsymbol{\theta})\Big|_{\boldsymbol{\theta^*}}\!\!\!\!= 0
	\label{Jcondition}
\end{align}
By introducing the bounded ES in \cite{scheinker_extremum_2014}, the classical multivariable ES was improved such that the update rate is known and bounded. It is achieved by the argument of a sine/cosine function confines the uncertainty.

In the multivariable bounded ES method, the case of minimization of a measurable output, the update law is defined as
\begin{align}
	&\dot{\theta}_i=\sqrt{\alpha_i\omega_i}\cos(\omega_it+\mu_iJ)\hspace{1cm} i=1,2,...,n
	\label{update_law_ith_shanker}
\end{align} 
where $\omega_i$ is the probing frequency and should be chosen  \( {\omega}_i \neq{\omega}_j \) for all \( i \neq j \). Also, $\alpha_i$ is nonzero and small amplitude.
\begin{theorem}\cite{scheinker_extremum_2014}
	Consider the multivariable bounded ES scheme with the update law \eqref{update_law_ith_shanker}
	and the map \( J\) satisfies \eqref{Jcondition}. The minimum value \( \boldsymbol{\theta^*}\in \mathbb{R}^n \) is \( \frac{1}{\omega} \)-SPUAS\footnote{$\frac{1}{\omega}$-Semiglobal Practical Uniform Asymptotic Stability}.
\end{theorem}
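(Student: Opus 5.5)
The plan follows the Lie-bracket averaging route of \cite{kurzweil1987limit,sussmann1992new}, which is how the result is obtained in \cite{scheinker_extremum_2014}. First I rewrite each component of \eqref{update_law_ith_shanker} with the angle-sum identity as
\begin{align*}
\dot{\theta}_i=\sqrt{\alpha_i\omega_i}\big[\cos(\mu_iJ(\boldsymbol{\theta}))\cos(\omega_it)-\sin(\mu_iJ(\boldsymbol{\theta}))\sin(\omega_it)\big].
\end{align*}
I then set $\omega_i=\omega\hat{\omega}_i$, with distinct rationals $\hat{\omega}_i$, so that the system takes the control-affine form
\begin{align*}
\dot{\boldsymbol{\theta}}=\sum_{k}b_k(\boldsymbol{\theta})\sqrt{\omega}\,u_k(\omega t).
\end{align*}
Here the vector fields are $b_{i,1}=\sqrt{\alpha_i\hat\omega_i}\cos(\mu_iJ)e_i$ and $b_{i,2}=-\sqrt{\alpha_i\hat\omega_i}\sin(\mu_iJ)e_i$. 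The inputs are $u_{i,1}=\cos(\hat\omega_i\tau)$ and $u_{i,2}=\sin(\hat\omega_i\tau)$, which are zero-mean and periodic with a common period $T$.

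By the Kurzweil--Jarnik/Sussmann--Liu result, trajectories of this system stay $\mathcal{O}(1/\sqrt{\omega})$-close, uniformly on compact time intervals, to those of the Lie bracket system
\begin{align*}
\dot{\bar{\boldsymbol\theta}}=\frac{1}{T}\sum_{k<l}[b_k,b_l](\bar{\boldsymbol\theta})\int_0^T\!\!\int_0^\tau u_l(\tau)u_k(s)\,ds\,d\tau.
\end{align*}
The key computation is to evaluate the iterated integrals. Brackets between different indices $i\neq j$ drop out because $\hat\omega_i\neq\hat\omega_j$, so the products $u_{i,\cdot}u_{j,\cdot}$ average to zero. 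For the same index $i$, the pair $(\cos,\sin)$ gives the weight $\frac{1}{2\hat\omega_i}$. This leaves the bracket
\begin{align*}
[b_{i,1},b_{i,2}]=\alpha_i\hat\omega_i\big(-\cos(\mu_iJ)\partial_i(\sin\mu_iJ)+\sin(\mu_iJ)\partial_i(\cos\mu_iJ)\big)e_i=-\alpha_i\hat\omega_i\mu_i\,\partial_iJ\,e_i,
\end{align*}
using $\cos^2+\sin^2=1$. The averaged dynamics are therefore $\dot{\bar\theta}_i=-\frac{\alpha_i\mu_i}{2}\partial_iJ(\bar{\boldsymbol\theta})$, i.e. a scaled gradient descent $\dot{\bar{\boldsymbol\theta}}=-\tfrac12\operatorname{diag}(\boldsymbol{\alpha}\circ\boldsymbol{\mu})\nabla J$.

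Next I show that $\boldsymbol{\theta^*}$ is GUAS for this averaged system. I take $V=J(\bar{\boldsymbol\theta})-J(\boldsymbol{\theta^*})$ as the Lyapunov function, which gives $\dot V=-\tfrac12\sum_i\alpha_i\mu_i(\partial_iJ)^2\le0$ provided $\alpha_i\mu_i>0$. Equality holds only at the critical point, which is unique by \eqref{Jcondition} and convexity. Radial unboundedness of $V$ has to be assumed; I would take $J$ coercive.

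Finally, I invoke the Dürr--Stanković--Ebenbauer--Johansson theorem. It states that if the Lie bracket system is GUAS and the required smoothness/boundedness holds on compacts, namely $b_k$ and $[b_k,b_l]$ locally Lipschitz and $J\in C^2$, then the original system is $\frac1\omega$-SPUAS. This yields the claim.

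I expect the main obstacle to be the sign and weight bookkeeping in the iterated integrals. The two-index cross terms must cancel, which needs the frequencies distinct and the combinations $\hat\omega_i\pm\hat\omega_j$ nonzero. Given that, the most delicate point is verifying the hypotheses of the averaging theorem: the bracket terms and their derivatives must be bounded uniformly on compacts, which holds because the $b_k$ are bounded by construction.
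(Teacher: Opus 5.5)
Your proposal follows the same route as the proof the paper defers to in \cite{scheinker_extremum_2014}: Lie-bracket averaging in the sense of \cite{durr2013lie} yields the scaled gradient flow $\dot{\bar\theta}_i=-\frac{\alpha_i\mu_i}{2}\partial_iJ(\bar\theta)$, a Lyapunov argument gives GUAS of $\theta^*$, and the transfer theorem built on \cite{moreau2002practical} gives $\frac{1}{\omega}$-SPUAS; your bracket $[b_{i,1},b_{i,2}]=-\alpha_i\hat\omega_i\mu_i\partial_iJ\,e_i$ and your weight $\frac{1}{2\hat\omega_i}$ are correct. The one slip is the claim that $\theta^*$ is the unique critical point by \eqref{Jcondition} and convexity, which is false for a merely convex $J$ (it can have a whole set of minimizers), so you need strict convexity or the explicit hypothesis $\nabla J(\theta)\neq 0$ for $\theta\neq\theta^*$; for convex $J$ that hypothesis already makes $V=J-J(\theta^*)$ radially unbounded, so your separate coercivity assumption is then unnecessary.
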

The proof of this theorem is provided in \cite{scheinker_extremum_2014} by combining the Lie bracket-based averaging results from \cite{durr2013lie} with the Lyapunov function analysis techniques from \cite{moreau2002practical}.
The converging trajectories property is an essential aspect of SPUAS as it ensures that trajectories of the system converge to a certain bounded region. For a deeper insight into this property and its implications, refer to  \cite{moreau2002practical}.
	\subsection{Local Exponential Stability Analysis}\label{stabilityGrad}
	In this section, we build upon the findings of \cite{scheinker_extremum_2014} to demonstrate local exponential stability of update law \eqref{update_law_ith_shanker}. To keep things straightforward, the compact form (vector form) for multivariable case is considered. In maximum seeking problem, there exist  $\boldsymbol{\theta^*}\in \mathbb{R}^n$ such that, \eqref{Jcondition} is satisfied for map $J(\boldsymbol{\theta})$. 

	The nonlinear map around the peak $\boldsymbol{\theta^*} $ can be rewrite by the second order Taylor series expansion 
	\begin{align}
		J(\boldsymbol{\theta})=J(\boldsymbol{\theta^*})+\frac{1}{2}(\boldsymbol{\theta}-\boldsymbol{\theta^*})^T \boldsymbol{H }(\boldsymbol{\theta}-\boldsymbol{\theta^*})
		\label{J}
	\end{align}
	where $\boldsymbol{H}$ is the unknown Hessian of the map and is assumed to be negative definite for the maximum-seeking problem.
	
	The multivariable bounded ES scheme in compact form is shown in Fig. \ref{figINT}.	The update law \eqref{update_law_ith_shanker} is governed by
	\begin{align}
		&\boldsymbol{\dot{\theta}}=\boldsymbol{{K_1}}\sin(\boldsymbol{\scalebox{0.85}{${\Omega}$}} t-\boldsymbol{\mu} J)
		\label{update_law_vector}
	\end{align}
	where $\boldsymbol{{K_1}}=\operatorname{diag} \left(\sqrt{\alpha_1 \omega_1},..., \sqrt{\alpha_n \omega_n}\right)$, $\boldsymbol{\mu}=\left[\mu_1, ..., \mu_n\right]^T$  and $\boldsymbol{\scalebox{0.85}{${\Omega}$}}=\left[\omega_1, ..., \omega_n\right]^T$ such that $\alpha_i>0$ and $\mu_i>0$ are tuning parameters and $\omega_i$'s are the probing frequencies which can be selected as
	\begin{align}
		\omega_i=\overline{\omega}_i\omega=O(\omega), \hspace{1cm}i\in {1,2,...,n}
	\end{align}
	where $\omega$ is a positive constant and $\overline{\omega}_i$ is a rational number. One possible choice is given in \cite{ghaffari_multivariable_2012} as
	\begin{align}
		\overline{\omega}_i\notin \{\overline{\omega}_j, \frac{1}{2}(\overline{\omega}_j+\overline{\omega}_k), \overline{\omega}_j+2\overline{\omega}_k, \overline{\omega}_j+\overline{\omega}_k\pm\overline{\omega}_l\}
		\label{W_isump}
	\end{align}
	for all distinct $i,j,k,l\in\{1,...,n\}$. Also, in multivariable case,
	it is necessary to choose a sufficiently large ${\omega}$ as well as
	distinct probing frequencies (${\omega_i}\neq {\omega_j}$), and ensure that
	the ratio ${\omega_i}/{\omega_j}$ is rational and ${\omega_i}+{\omega_j}\neq {\omega_k}$ for different
	$i, j, k$. 
	The stability properties of the system \eqref{update_law_vector} are stated in the following theorem.
	\begin{theorem}
		Consider the ES scheme shown in Fig.~\ref{figINT} with the update law \eqref{update_law_vector}. 
		There exist positive constants $\omega^*$ and $\alpha^*$ such that for all perturbation frequencies $\omega>\omega^*$ and \(0\!<\!\alpha_i\!<\!{\alpha}^*\), \(i\!=\!1,\ldots,n\),
		the solution $\boldsymbol{\theta}(t)$ of \eqref{update_law_vector} locally exponentially 
		converges to a periodic function within  $O\!\left(\frac{1}{\omega}+\underset{i}{\max}\sqrt{\frac{\alpha_i}{\omega_i}}\right)$-neighborhood 
		of $\boldsymbol{\theta}^*$.
		\label{shanker_static_m}
	\end{theorem}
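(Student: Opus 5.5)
We plan to prove Theorem~\ref{shanker_static_m} by a classical first-order averaging argument in the error coordinates, after a time rescaling that exposes the small parameter. Let $\tilde{\boldsymbol{\theta}}=\boldsymbol{\theta}-\boldsymbol{\theta}^*$ and $\tau=\omega t$. Using \eqref{J}, the update law \eqref{update_law_vector} becomes
\begin{align*}
\frac{d\tilde{\boldsymbol{\theta}}}{d\tau}=\frac{1}{\omega}\boldsymbol{K_1}\sin\Big(\bar{\boldsymbol{\Omega}}\tau-\boldsymbol{\mu}\big(J^*+\tfrac12\tilde{\boldsymbol{\theta}}^T\boldsymbol{H}\tilde{\boldsymbol{\theta}}\big)\Big),
\end{align*}
with $\bar{\boldsymbol{\Omega}}=[\overline{\omega}_1,\dots,\overline{\omega}_n]^T$. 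The $i$-th entry of the right-hand side has size $\sqrt{\alpha_i\omega_i}/\omega=\sqrt{\alpha_i\overline{\omega}_i/\omega}$. The direct average of $\sin(\overline{\omega}_i\tau-\mu_iJ)$ over $\tau$ is zero, so first-order averaging gives no drift. The gradient term appears only at second order, through the Lie-bracket-type term. We therefore perform a near-identity transformation first and average afterwards.

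\textbf{Step 1 (near-identity change of variables).} Define
\begin{align*}
\tilde{\theta}_i=z_i-\frac{\sqrt{\alpha_i\omega_i}}{\omega_i}\cos(\omega_i t-\mu_iJ(\boldsymbol{\theta}^*+\boldsymbol{z})).
\end{align*}
This shift is $O(\sqrt{\alpha_i/\omega_i})$, which is exactly the second term in the claimed neighborhood. Differentiating and substituting produces a $\boldsymbol{z}$-dynamics of the form $\dot{\boldsymbol{z}}=\boldsymbol{f}(\boldsymbol{z},t)$. The terms linear in the oscillation are of the form $\sqrt{\alpha_i\omega_i}\sin\cos$-products. The product $\sqrt{\alpha_i\omega_i}\sin(\cdot)\cdot\frac{\sqrt{\alpha_i\omega_i}}{\omega_i}\mu_i\partial_{z_i}J\,\sin(\cdot)$ has nonzero mean $\frac{\alpha_i\mu_i}{2}\partial_iJ$. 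Cross terms of the type $\sin(\omega_it)\cos(\omega_jt)$ with $i\neq j$ average to zero by the frequency conditions \eqref{W_isump}, and the remaining terms are $O(1/\sqrt{\omega})$ oscillatory or higher order. For maximum seeking with the sign convention of \eqref{update_law_vector}, we expect the average system to be
\begin{align*}
\dot{\boldsymbol{z}}_{av}=\tfrac12\operatorname{diag}(\boldsymbol{\alpha})\operatorname{diag}(\boldsymbol{\mu})\boldsymbol{H}\boldsymbol{z}_{av}+O(|\boldsymbol{z}_{av}|^2).
\end{align*}

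\textbf{Step 2 (exponential stability of the average).} The matrix $\boldsymbol{A}=\frac12 D\boldsymbol{H}$, with $D=\operatorname{diag}(\alpha_i\mu_i)\succ0$, is similar to $\frac12 D^{1/2}\boldsymbol{H}D^{1/2}$, which is negative definite. Hence $\boldsymbol{A}$ is Hurwitz. We take the Lyapunov function $V=\boldsymbol{z}^TD^{-1}\boldsymbol{z}$ and obtain
\begin{align*}
\dot V=\boldsymbol{z}^T\boldsymbol{H}\boldsymbol{z}+O(|\boldsymbol{z}|^3)\le-\lambda_{\min}(-\boldsymbol{H})|\boldsymbol{z}|^2/2
\end{align*}
locally. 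This gives an explicit decay rate and local exponential stability of the equilibrium $\boldsymbol{z}_{av}=0$.

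\textbf{Step 3 (averaging theorem and conclusion).} Because the frequencies have rational ratios, the system is periodic in $t$ with a common period $T=O(1/\omega)$. We invoke the standard averaging theorem for an exponentially stable equilibrium of the average system (Khalil, Thm.~10.4). This yields, for $\omega>\omega^*$ and $\alpha_i<\alpha^*$, a unique exponentially stable $T$-periodic solution $\boldsymbol{z}^{T}(t)$ with $|\boldsymbol{z}^T|=O(1/\omega)$. Nearby solutions converge to it exponentially. Mapping back through Step 1 adds the $O(\max_i\sqrt{\alpha_i/\omega_i})$ oscillation, which gives convergence of $\boldsymbol{\theta}(t)$ to a periodic function within an $O(1/\omega+\max_i\sqrt{\alpha_i/\omega_i})$-neighborhood of $\boldsymbol{\theta}^*$.

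The main obstacle is Step 1. The small parameter is not cleanly separated, since the amplitude $\sqrt{\alpha_i\omega_i}$ grows with $\omega$. The system must therefore be put in the standard form $\dot{\boldsymbol{z}}=\epsilon \boldsymbol{f}(\boldsymbol{z},\tau)$ with $\epsilon$ small. The residual oscillatory $O(\sqrt{\omega})\times O(1/\sqrt{\omega})$ products then have to be shown to be genuinely small or zero-mean, using the non-resonance conditions \eqref{W_isump}. If direct first-order averaging fails, we would apply a second, near-identity step as in Lie-bracket averaging, and then use the rescaling $\tau=\omega t$ so that the averaged drift is $O(\alpha)$ per unit time.
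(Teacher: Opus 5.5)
Your plan is correct and is essentially the paper's proof. Both use the same shift $\boldsymbol{\theta}-\boldsymbol{\theta}^*+\boldsymbol{K_2}\cos(\boldsymbol{\Omega}t-\boldsymbol{\mu}J)$, rescale to $\tau=\omega t$, average to $\frac12\operatorname{diag}(\alpha_i\mu_i)\boldsymbol{H}$, and then apply Khalil's Theorem~10.4 and the triangle inequality. Your Step~2 supplies the Hurwitz argument (similarity to $\frac12 D^{1/2}\boldsymbol{H}D^{1/2}$) that the paper only asserts.

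The obstacle you flag does not arise. Since $\boldsymbol{K_2}\operatorname{diag}(\boldsymbol{\Omega})=\boldsymbol{K_1}$, the shift cancels the $O(\sqrt{\omega})$ term, leaving a right-hand side that is $O(\alpha)$ uniformly in $\omega$. The cancellation is exact if $J$ is evaluated at $\boldsymbol{\theta}$, as the paper does and as your $\sin\cdot\sin$ description of the drift implicitly assumes. With $J(\boldsymbol{\theta}^*+\boldsymbol{z})$, the same mean $\frac12\alpha_i\mu_i\partial_iJ$ comes instead from the leftover $O(\alpha)$ phase mismatch. Either way, $\tau=\omega t$ directly gives the standard form with $\epsilon=1/\omega$, and no second near-identity step is needed.
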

	
	\begin{figure}[t]
		\centering
		\includegraphics[scale = .75]{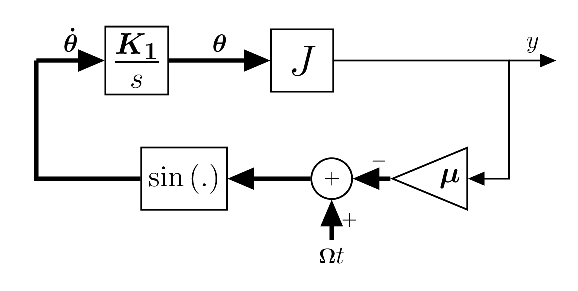} 
		\caption{ Multivariable Bounded ES  scheme.}
		\label{figINT}
	\end{figure}
	\begin{proof}
		Consider the change of variables as follows
		\begin{equation}
			\boldsymbol{\tilde{\theta}}=\boldsymbol{\theta}-\boldsymbol{\theta^*}+
			\boldsymbol{K_2}\cos(\boldsymbol{\scalebox{0.85}{${\Omega}$}} t-\boldsymbol{\mu}J)
			\label{tracking-error}
		\end{equation}
		where $
		\boldsymbol{	{K_2}}=\operatorname{diag}\left({\sqrt{\frac{\alpha_1}{\omega_1}}, ...,\sqrt{\frac{\alpha_n}{\omega_n}}}\right)$. Its time-derivative can be expressed as 
		
		\begin{equation}
			\boldsymbol{\dot{\tilde{\theta}}}=\boldsymbol{\dot{\theta}}-
			\boldsymbol{{K_2}}\left(
			\operatorname{diag}(\boldsymbol{\scalebox{0.85}{${\Omega}$}})-
			\operatorname{diag}(\boldsymbol{\mu})\frac{dJ}{dt}\right)\sin(\boldsymbol{\scalebox{0.85}{${\Omega}$}} t-\boldsymbol{\mu}J)
			\label{tracking-error_D}
		\end{equation}
		and by placing the update law (\ref{update_law_vector})
		\begin{equation}
			\boldsymbol{\dot{\tilde{\theta}}}=
			\boldsymbol{	{K_2} }
			\operatorname{diag}(\boldsymbol{\mu})\frac{dJ}{dt}\sin(\boldsymbol{\scalebox{0.85}{${\Omega}$}} t-\boldsymbol{\mu}J)
			\label{tracking-error_D1}
		\end{equation}
		$\frac{dJ}{dt}$ with respect to \eqref{update_law_vector} and \eqref{tracking-error}  can be taken as
		\begin{align}
		\nonumber	\frac{dJ}{dt}&=(\boldsymbol{\theta}-\boldsymbol{\theta^*})^TH\boldsymbol{\dot{\theta}}=(\boldsymbol{\tilde{\theta}}-\boldsymbol{K_2}\cos(\boldsymbol{\scalebox{0.85}{${\Omega}$}} t-\boldsymbol{\mu}J))^T\boldsymbol{HK_1}\\
		&	\hspace{4.3cm}\times\sin(\boldsymbol{\scalebox{0.85}{${\Omega}$}} t-\boldsymbol{\mu} J)
		\end{align}
		so we have 
		\begin{align}
			\nonumber			\boldsymbol{\dot{\tilde{\theta}}}=
			\boldsymbol{	{K_2} }
			\operatorname{diag}(\boldsymbol{\mu})&\sin(\boldsymbol{\scalebox{0.85}{${\Omega}$}}t-\boldsymbol{\mu}J)\sin^T(\boldsymbol{\scalebox{0.85}{${\Omega}$}} t-\boldsymbol{\mu}J)\\
			&\times\boldsymbol{K_1H}(\boldsymbol{\tilde{\theta}}-\boldsymbol{K_2}\cos(\boldsymbol{\scalebox{0.85}{${\Omega}$}} t-\boldsymbol{\mu}J))
			\label{tracking-error_D12}
		\end{align}
		In the new time-scale $\tau=\omega t$, we can rewrite \eqref{tracking-error_D12} as  
		\begin{align}
			\nonumber\frac{d\boldsymbol{{\tilde{\theta}}}}{d\tau}=
			\epsilon \boldsymbol{K_2} \operatorname{diag}({\boldsymbol{\mu}})&\sin(\overline{\boldsymbol{\scalebox{0.85}{${\Omega}$}}}\tau-{\boldsymbol{\mu}}J)\sin^T(\overline{\boldsymbol{\scalebox{0.85}{${\Omega}$}}}\tau-{\boldsymbol{\mu}} J){\boldsymbol{K_1}}
			\\& \times \boldsymbol{H}\left(\boldsymbol{\tilde{\theta}}-\boldsymbol{K_2}\cos(\overline{\boldsymbol{\scalebox{0.85}{${\Omega}$}}}\tau-{\boldsymbol{\mu}}J)\right)
			\label{tracking-error_D_nn}
		\end{align}
		where $\epsilon=\frac{1}{\omega}$ and it is small enough for $\omega$ to be sufficiently large.
		The system \eqref{tracking-error_D_nn} is $\Pi$-periodic in $\tau$ and in the standard form of the averaging theorem, where
		\begin{align}
			\Pi=2\pi\times LCM\{\frac{1}{\overline{\omega}_1},...,\frac{1}{\overline{\omega}_n}\}
		\end{align}
		$LCM$ stands for the least common multiple. We invoke the averaging theorem \cite[Th.~10.4]{khalil2002nonlinear} to analyze the stability properties. 
		Using the definition of averaging operator, the average system is
		\begin{equation}
			\boldsymbol{\dot{\tilde{\theta}}_{av}}=\epsilon\boldsymbol{KH}\boldsymbol{\tilde{\theta}_{av}}
			\label{CloseFormTav}
		\end{equation}
		where $\boldsymbol{{K}}=\operatorname{diag} \left(\frac{\mu_1\alpha_1}{2},..., \frac{\mu_n\alpha_n}{2} \right)$.
		\( \boldsymbol{\tilde{\theta}_{av}}= 0 \) is a locally exponentially stable equilibrium point for average system \eqref{CloseFormTav}. 
		Therefore, the application of the averaging theorem  proves the existence of a unique exponentially 
		stable periodic solution  \( {\boldsymbol{\tilde{\theta}}} \) of period
		\( \Pi \) within $O\!\left( \frac{1}{\omega} \right)$-neighborhood of equilibrium point system \eqref{CloseFormTav}, so we have
		\begin{align}
			\|{\boldsymbol{\tilde\theta}}\|\!=\!O\left( \frac{1}{\omega} \right) \!\Rightarrow\! \|\boldsymbol{\theta}-\boldsymbol{\theta^*}+\boldsymbol{	{K_2}}\cos(\boldsymbol{\scalebox{0.85}{${\Omega}$}} t-\boldsymbol{\mu}J)\|\!=\!O\left( \frac{1}{\omega} \right)
		\end{align}	
		using the triangle inequality, we obtain
		\begin{align}
			\|\boldsymbol{\theta} - \boldsymbol{\theta}^*\| \leq O\!\left(  \underset{i}{\max}\sqrt{\frac{\alpha_i}{\omega_i}}+ \frac{1}{\omega} \right) 
		\end{align}
		and, the proof is completed.
	\end{proof}
	From the averaged dynamics \eqref{CloseFormTav}, it can be observed that the convergence rate is governed by the unknown Hessian matrix $\boldsymbol{H}$.
	It is worth mentioning that the exact convergence rate cannot be determined because the ES algorithm is model-free, and we lack information about the Hessian matrix. This challenge is even more significant in multivariable systems, where additional parameters require tuning. In the next section, the Newton-based method, which has been introduced to address this problem, will be explored to establish for the bounded ES.
	
	\section{multivariable Newton-Based Extremum Seeking with bounded update rate}\label{newton}
	The gradient-based ES aims to make the gradient of the map approach zero, which means the system is being guided toward the optimal or extremum point. This principle also applies to the Newton-based ES, but the Newton approach goes a step further: it compensates for the effect of the map’s curvature (the shape of the cost surface) on how fast convergence happens. As a result, the Newton-based ES can often reach the extremum more quickly and accurately. 
	To achieve this, however, the algorithm must estimate not only the gradient but also the Hessian matrix of the underlying map, which represents how the gradient changes in different directions.
	\subsection{Proposed Approach}
	The proposed structure of the Newton-based ES scheme with a bounded update rate for a multivariable static map is illustrated in Fig \ref{fig2}. As mentioned in \cite{moase_newton-like_2010}, there are two essential parts to Newton-based ES scheme:
	\begin{enumerate}
		\item Estimate of the Hessian (shown as ${\boldsymbol{\hat{H}}}$ in Fig \ref{fig2}) using a suitable demodulation matrix (shown as ${\boldsymbol{N(t)}}$).
		\item Estimate and apply the inverse Hessian to the adaptive loop using a Ricatti filter.
	\end{enumerate}
	We summarize the system in Fig \ref{fig2} as  follows:
	\begin{subequations}
		\begin{align}
			\label{T1}
			&\boldsymbol{\dot{\theta}}=\boldsymbol{K_1}\sin \left(\boldsymbol{\scalebox{0.85}{${\Omega}$}} t+{\boldsymbol{\Gamma}}\boldsymbol{\mu}J\right)\\
			\label{T2}
			&\dot{{\boldsymbol{\Gamma}}}=\omega_r{\boldsymbol{\Gamma}}-\omega_r{\boldsymbol{\Gamma}} {\boldsymbol{\hat{H}}}{\boldsymbol{\Gamma}}
		\end{align}
		\label{System2}
	\end{subequations}
	where $\boldsymbol{K_1}=\operatorname{diag}\left(\sqrt{\alpha_1\omega_1},..., \sqrt{\alpha_n\omega_n}\right)$ and $\omega_r$ is a positive design parameter. 
	The equation \eqref{T2} is the differential Riccati equation for the Riccati
	filter which is used to apply ${\boldsymbol{\hat{H}}}^{-1}$ to the
	adaptive loop. Also the initial conditions $\boldsymbol{\Gamma}(0)$ must be chosen negative (positive) definite and symmetric if the cost function is convex (concave) map.
	\begin{figure}[t]
		\centering
		\includegraphics[scale = .5]{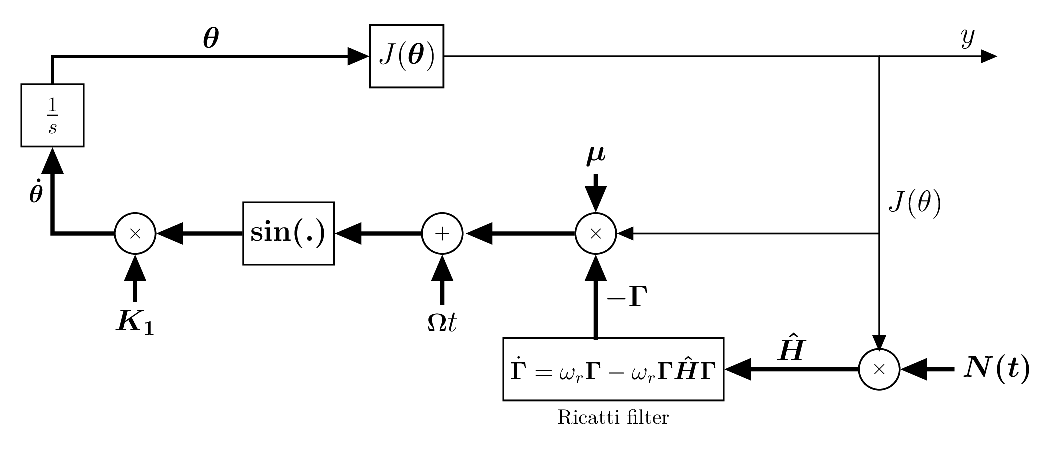}
		\caption{ Multivariable Newton-based Bounded ES scheme.}
		\label{fig2}
	\end{figure}
	\subsection{Hessian Estimation}\label{section:parthessian}
	The main idea of the Newton-based ES has been taken from \cite{ghaffari_multivariable_2012}, such that the suitable demodulation signals are designed to  produce, on average, the gradient and Hessian estimate of the cost function. To continue, the static cost function has been selected as \eqref{J}. It is assumed that the extremum point is a maximum, so $\boldsymbol{H}$ is negative definite. Let 
	\begin{subequations}
		\begin{align}
			\label{errorT}
			&\boldsymbol{\tilde{\theta}}=\boldsymbol{\theta}-\boldsymbol{\theta^*}+\boldsymbol{K_2}\cos(\boldsymbol{\scalebox{0.85}{${\Omega}$}} t+
			\boldsymbol{	{\Gamma}}\boldsymbol{\mu}J)\\
			\label{errorG}
			&\boldsymbol{\tilde{\Gamma}}=\boldsymbol{{\Gamma}}-\boldsymbol{H}^{-1}
		\end{align}
		\label{error}
	\end{subequations}
	be the estimation error of the extremum point $\boldsymbol{\theta^*}$ and the Hessian inverse. The cost function, $J$ in the new variable is
	\begin{align}
		\allowdisplaybreaks
		\nonumber
		J(\boldsymbol{\tilde{\theta}})=J(\boldsymbol{\theta^*})&+\frac{1}{2}\boldsymbol{\tilde{\theta}}^T \boldsymbol{H} \boldsymbol{\tilde{\theta}}-\frac{1}{2}\cos^T(\boldsymbol{\scalebox{0.85}{${\Omega}$}} t+\boldsymbol{\phi})\boldsymbol{K_2}^T \boldsymbol{H} \boldsymbol{\tilde{\theta}}\\
		\nonumber	&-\frac{1}{2} \boldsymbol{\tilde{\theta}}^T\boldsymbol{H}\boldsymbol{K_2}\cos(\boldsymbol{\scalebox{0.85}{${\Omega}$}} t+\boldsymbol{\phi})\\
		&+\frac{1}{2} \cos^T(\boldsymbol{\scalebox{0.85}{${\Omega}$}} t+\boldsymbol{\phi})\boldsymbol{K_2}^T\boldsymbol{H}\boldsymbol{K_2}\cos(\boldsymbol{\scalebox{0.85}{${\Omega}$}} t+\boldsymbol{\phi})
		\label{static-map-new2}
	\end{align}
	where $\boldsymbol{\phi}=(\boldsymbol{\tilde{\Gamma}}+\boldsymbol{H}^{-1})\boldsymbol{\mu}J$. 
	
	Now we aim to design the demodulation matrix $\boldsymbol{N}(t)$ in such a way that the average value of $\boldsymbol{N(t)}\times J(\boldsymbol{\tilde{\theta}})$ 
	provides the estimation of matrix $\boldsymbol{H}$.	First, multiplying the cost function \eqref{static-map-new2}  by  $\boldsymbol{K_2}^{\!\!\!-1} \cos(\boldsymbol{\scalebox{0.85}{${\Omega}$}} t+\boldsymbol{\phi})\cos^T(\boldsymbol{\scalebox{0.85}{${\Omega}$}} t+\boldsymbol{\phi})\boldsymbol{K_2}^{\!\!\!-1}$, we have
	\begin{align}
		\nonumber & \boldsymbol{\eta}=\boldsymbol{K_2}^{\!\!\!-1} \cos(\boldsymbol{\scalebox{0.85}{${\Omega}$}} t+\boldsymbol{\phi}) \times 	J(\boldsymbol{\tilde{\theta}})\times\cos^T(\boldsymbol{\scalebox{0.85}{${\Omega}$}} t+\boldsymbol{\phi})\boldsymbol{K_2}^{\!\!\!-1}\\
		\nonumber& =\hspace{0cm}\!\left(J(\boldsymbol{\theta^*})+\frac{1}{2}\boldsymbol{\tilde{\theta}}^T \boldsymbol{H} \boldsymbol{\tilde{\theta}}\right) \!\!\boldsymbol{K_2}^{\!\!\!-1} \cos(\boldsymbol{\scalebox{0.85}{${\Omega}$}} t+\boldsymbol{\phi})\cos^T\!(\boldsymbol{\scalebox{0.85}{${\Omega}$}} t+\boldsymbol{\phi})\boldsymbol{K_2}^{\!\!\!-1}\\
		\nonumber &\hspace{0.5cm}-\frac{1}{2}\boldsymbol{K_2}^{\!\!\!-1} \cos(\boldsymbol{\scalebox{0.85}{${\Omega}$}} t+\boldsymbol{\phi}) \!\!\left(\boldsymbol{\tilde{\theta}}^T\!\boldsymbol{H}\boldsymbol{K_2}\cos(\boldsymbol{\scalebox{0.85}{${\Omega}$}} t+\boldsymbol{\phi})\right)\\
		\nonumber&\hspace{3cm}\times	\cos^T(\boldsymbol{\scalebox{0.85}{${\Omega}$}} t+\boldsymbol{\phi})\boldsymbol{K_2}^{\!\!\!-1}\\
		&\hspace{0.5cm}+\frac{1}{2}\nonumber\left(\cos^T(\boldsymbol{\scalebox{0.85}{${\Omega}$}} t+\boldsymbol{\phi})\boldsymbol{K_2}^T\boldsymbol{H}\boldsymbol{K_2}\cos(\boldsymbol{\scalebox{0.85}{${\Omega}$}} t+\boldsymbol{\phi})\right)\\
		\nonumber&\hspace{3cm}\times \boldsymbol{K_2}^{\!\!\!-1} \cos(\boldsymbol{\scalebox{0.85}{${\Omega}$}} t+\boldsymbol{\phi})\cos^T(\boldsymbol{\scalebox{0.85}{${\Omega}$}} t+\boldsymbol{\phi})\boldsymbol{K_2}^{\!\!\!-1}\\
		& =\boldsymbol{\eta_1}+\boldsymbol{\eta_2}+\boldsymbol{\eta_3}
		\label{ave2}
	\end{align}
		To apply the averaging theory for obtaining $\boldsymbol{\eta}_{av}$, consider $\boldsymbol{\eta}$ as \eqref{ave2} in the time-scale $\tau=\omega t$ such that $\frac{1}{\omega}$ is a small positive parameter and $\boldsymbol{\tilde{\theta}}$ and $J(\boldsymbol{\theta^*})$ are recognized as constant. Consequently the average value of the first summand ($\boldsymbol{\eta_1}$) is
		\begin{align}
			\boldsymbol{\eta_{1,av}}=\frac{1}{2}\left(J(\boldsymbol{\theta^*})+\frac{1}{2}\boldsymbol{\tilde{\theta}_{av}}^T \boldsymbol{H} \boldsymbol{\tilde{\theta}_{av}}\right)\boldsymbol{K_2}^{\!\!\!-1}\boldsymbol{K_2}^{\!\!\!-1}
		\end{align}
		In a lengthy yet straightforward calculation, it can be demonstrated that the expression for the second summand  ($\boldsymbol{\eta}_2$) forms a matrix with elements that are expressed as various combinations of trigonometric functions, specifically involving cosine terms. These terms include: $
		\cos^3(\overline{{\omega}}_i\tau +{\phi}_i),\hspace{1mm} \cos^2(\overline{{\omega}}_i\tau +{\phi}_i) \cos(\overline{{\omega}}_j\tau +{\phi}_j)$ or $ \cos(\overline{{\omega}}_i\tau +{\phi}_i) \cos(\overline{{\omega}}_j\tau +{\phi}_j)\cos(\overline{{\omega}}_k\tau +{\phi}_k)$, for $i\neq j\neq k$. And the probing frequencies $\omega_i$  are chosen to satisfy \eqref{W_isump}, so the average of all elements in this matrix equals zero. As a result, the average of the second term effectively cancels out or vanishes.
		
		In order to express in more detail about the average of $\boldsymbol{\eta}_3$, it is rewritten in an element-wise manner, and  elements are 
		\begin{align}
			\allowdisplaybreaks
			\nonumber (\boldsymbol{\eta_3})_{ij}=&\frac{1}{2k_ik_j} \cos(\omega_i t + \phi_i) \cos(\omega_j t + \phi_j)\times \\
			&
			\sum_{m=1}^{n} \sum_{k=1}^{n} k_k k_m H_{km}	\cos(\omega_m t + \phi_m)\cos(\omega_k t + \phi_k) 
			\label{eta3ij}
		\end{align} where $k_i$ are the diagonal elements of matrix $\boldsymbol{K_2}$.	On the probing frequencies $\overline{{\omega}}_i$, we have
		\begin{align}
			\allowdisplaybreaks
			&\frac{1}{T}\int_{0}^{T}\cos^4(\overline{{\omega}}_i\tau+{\phi}_i)d\tau=\frac{3}{8}\\
			&\frac{1}{T}\int_{0}^{T}\cos^2 (\overline{{\omega}}_i\tau+{\phi}_i)\cos^2(\overline{{\omega}}_j\tau+{\phi}_j)d\tau=\frac{1}{4}, \hspace{0.3cm}  i\neq j
		\end{align}
		thus the average equation \eqref{eta3ij} becomes 
		\begin{align}
			(\boldsymbol{\eta_3})_{ij,av} =  \begin {cases}
			\frac{1}{2}\times \frac{3}{8}H_{ii}+\frac{1}{2k_i^2}\sum_{m=1,\neq i}^{n}\frac{k_m^2}{4}H_{mm} &   i=j \\ \frac{1}{8}(H_{ij}+H_{ji})= \frac{1}{4}H_{ij}&  i\neq j
		\end{cases}
	\end{align}
	Overall,  $\boldsymbol{\eta_{av}}$ is  given by
	
	\begin{align}
		\nonumber\boldsymbol{\eta_{av}}= &\frac{1}{2}\boldsymbol{K_2}^{\!\!\!-1}\!\!\left(\!J(\boldsymbol{\theta^*})+\frac{1}{2}\boldsymbol{\tilde{\theta}_{av}}^T \boldsymbol{H} \boldsymbol{\tilde{\theta}_{av}}+\frac{1}{4} tr(\boldsymbol{K_2} \boldsymbol{H }\boldsymbol{K_2})\!\right)\!\!\boldsymbol{K_2}^{\!\!\!-1}\\
		&\hspace{0.2 cm}+\frac{1}{4} \boldsymbol{H }-\frac{3}{16}\operatorname{diag}( {H }_{11},..., {H }_{nn})
		\label{ave3}
	\end{align}
	Note that the average of the cost function \eqref{static-map-new2} is
	\begin{align}
		J_{av} =J(\boldsymbol{\theta^*})+\frac{1}{2}\boldsymbol{\tilde{\theta}_{av}}^T \boldsymbol{H} \boldsymbol{\tilde{\theta}_{av}}+\frac{1}{4}  tr(\boldsymbol{K_2}^T \boldsymbol{H }\boldsymbol{K_2})
		\label{aveJ}
	\end{align}
	It can be seen that
	\begin{align}
		\nonumber4\boldsymbol{\eta_{av}}-2\boldsymbol{K_2}^{\!\!\!-1}J_{av}\boldsymbol{K_2}^{\!\!\!-1} &=\boldsymbol{H}-\frac{3}{4}\operatorname{diag}(H_{11},...,H_{nn})
		\\&\hspace{-1cm}=\left[
		\begin{array}{cccc}
			\frac{1}{4}H_{11} & H_{12} & ... & H_{1n} \\
			H_{12} & \frac{1}{4}H_{22} & ... & H_{2n} \\
			\colon & \colon &  &  \colon\\
			H_{1n} & ... &  & \frac{1}{4}H_{nn}
		\end{array}
		\right]
		\label{finalMM}
	\end{align}
	Therefore, from \eqref{ave2} and \eqref{finalMM}, an estimate of the Hessian can be obtained by multiplying the map $J$ with the matrix 
	$	4 \boldsymbol{K_2}^{\!\!\!-1} \cos(\boldsymbol{\scalebox{0.85}{${\Omega}$}} t+\boldsymbol{\phi})\cos^T(\boldsymbol{\scalebox{0.85}{${\Omega}$}} t+\boldsymbol{\phi})\boldsymbol{K_2}^{\!\!\!-1}-2\boldsymbol{K_2}^{\!\!\!-1}\boldsymbol{K_2}^{\!\!\!-1}$, except that its diagonal entries must be increased fourfold. Thus, the entries of the demodulation matrix $\boldsymbol{N}(t)$ are 
	\begin{align}
		N_{ij}(t) = \begin {cases}
		16k_i^2\left(\cos^2(\omega_i t+\phi_i)-\frac{1}{2}\right) &   i=j \\ 4k_ik_j\cos(\omega_i t+\phi_i)\cos(\omega_j t+\phi_j) &  i\neq j
	\end{cases}
\end{align}
\subsection{Stability Analysis}
This section establishes the stability of the proposed algorithm in the static-map setting. We first introduce the following assumption on the Hessian, which characterizes weak coupling among the coordinates.
\begin{assumption}\label{assumpH}
	The off-diagonal Hessian terms are assumed to be sufficiently small, i.e.,
	$
	\sum_{j\neq i}|H_{ij}|\leq \varepsilon_i |H_{ii}|,\quad 0<\varepsilon_i<1
	$.
\end{assumption}
The Hessian estimator developed in Section \ref{section:parthessian} does not require any restriction on the relative magnitude of the diagonal and off-diagonal Hessian entries. However, a condition on the Hessian coupling is introduced in the stability analysis of the proposed Newton-based bounded ES scheme.
We use the averaging method to analyze the local stability which is stated in the following theorem.
\begin{theorem}
	Consider the feedback system \eqref{System2} under Assumption \ref{assumpH}. 
	There exists  positive constants $\omega^*$ and $\alpha^*$ such that for all perturbation frequencies $\omega>\omega^*$ and \(0\!<\!\alpha_i\!<\!{\alpha}^*\), \(i\!=\!1,\ldots,n\),
	the solutions $(\boldsymbol{\theta},\boldsymbol{\Gamma})$ of \eqref{System2} are locally exponentially 
	attracted to a periodic solution lying in an $O\!\left(\frac{1}{\omega}+\underset{i}{\max}\sqrt{\frac{\alpha_i}{\omega_i}}\right)$-neighborhood of 
	$(\boldsymbol{\theta}^*,\boldsymbol{H}^{-1})$.
\end{theorem}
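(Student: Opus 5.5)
We follow the template of the proof of Theorem~\ref{shanker_static_m}, now with the augmented state $(\boldsymbol{\tilde\theta},\boldsymbol{\tilde\Gamma})$ defined in \eqref{error}.

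\textbf{Step 1: error dynamics for $\boldsymbol{\tilde\theta}$.} Differentiate \eqref{errorT}. The $\boldsymbol{K_2}\operatorname{diag}(\boldsymbol{\Omega})$ term cancels the update law \eqref{T1}, exactly as before. What remains is
\[
\boldsymbol{\dot{\tilde\theta}}=-\boldsymbol{K_2}\operatorname{diag}\!\big(\tfrac{d}{dt}(\boldsymbol{\Gamma}\boldsymbol{\mu}J)\big)\sin(\boldsymbol{\Omega}t+\boldsymbol{\Gamma}\boldsymbol{\mu}J).
\]
Here $\tfrac{d}{dt}(\boldsymbol{\Gamma}\boldsymbol{\mu}J)=\boldsymbol{\Gamma}\boldsymbol{\mu}\dot J+\dot{\boldsymbol{\Gamma}}\boldsymbol{\mu}J$, and $\dot J=(\boldsymbol{\tilde\theta}-\boldsymbol{K_2}\cos(\cdot))^T\boldsymbol{H}\boldsymbol{K_1}\sin(\cdot)$.

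\textbf{Step 2: error dynamics for $\boldsymbol{\tilde\Gamma}$.} Write the Riccati equation \eqref{T2} with $\boldsymbol{\hat H}=\boldsymbol{N}(t)J$. Expanding \eqref{T2} around $\boldsymbol{H}^{-1}$ gives
\[
\boldsymbol{\dot{\tilde\Gamma}}=-\omega_r\boldsymbol{\tilde\Gamma}-\omega_r\boldsymbol{\tilde\Gamma}(\boldsymbol{\hat H}-\boldsymbol{H})\boldsymbol{\tilde\Gamma}-\omega_r\boldsymbol{H}^{-1}(\boldsymbol{\hat H}-\boldsymbol{H})\boldsymbol{H}^{-1}+\text{h.o.t.},
\]
up to the quadratic terms. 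Choose $\omega_r=\omega\bar\omega_r$ with $\bar\omega_r=O(1)$ small, so that after rescaling $\tau=\omega t$ the whole system is in standard averaging form $\tfrac{d}{d\tau}x=\epsilon f(\tau,x)$ with $\epsilon=1/\omega$. The system is $\Pi$-periodic and smooth near the origin.

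\textbf{Step 3: compute the average.} The $\boldsymbol{\tilde\theta}$-equation is computed as in \eqref{tracking-error_D_nn}. The $\dot{\boldsymbol\Gamma}$ contribution is multiplied by $\sin(\cdot)$, and its average, combined with $J$, is of higher order in $\boldsymbol{\tilde\theta},\boldsymbol{\tilde\Gamma}$ and in $\sqrt{\alpha_i/\omega_i}$. The $\boldsymbol\Gamma\boldsymbol\mu\dot J$ part yields
\[
\boldsymbol{\dot{\tilde\theta}}_{av}=-\epsilon\boldsymbol{K}'\,\operatorname{diag}\!\big((\boldsymbol{\tilde\Gamma}_{av}+\boldsymbol{H}^{-1})\boldsymbol{\mu}\big)\ldots\boldsymbol{H}\boldsymbol{\tilde\theta}_{av}.
\]
The point to check is that the phase $\boldsymbol\phi$ now carries $\boldsymbol\Gamma$, so the demodulated product picks up $\boldsymbol{\Gamma}\boldsymbol{H}$ instead of $\boldsymbol{H}$. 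Linearizing at the origin, this gives $\boldsymbol{\dot{\tilde\theta}}_{av}=-\epsilon\boldsymbol{K}\boldsymbol{M}\boldsymbol{\tilde\theta}_{av}$, with $\boldsymbol{M}$ built from $\boldsymbol{H}^{-1}$ and $\boldsymbol{H}$.

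For the $\boldsymbol{\tilde\Gamma}$-equation, use the Hessian-estimation result \eqref{finalMM} together with the construction of $\boldsymbol{N}$. Modulo terms that are $O(\|\boldsymbol{\tilde\theta}_{av}\|^2)$ and $O(\max_i\alpha_i/\omega_i)$, this gives $(\boldsymbol{N}J)_{av}=\boldsymbol{H}$, so the averaged equation is
\[
\tfrac{d}{d\tau}\boldsymbol{\tilde\Gamma}_{av}=-\epsilon\bar\omega_r\boldsymbol{\tilde\Gamma}_{av}+\text{quadratic}.
\]
The linearization is therefore block triangular. The $\boldsymbol{\tilde\Gamma}$ block is $-\bar\omega_r I$, which is Hurwitz. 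The residual bias in $\boldsymbol{\hat H}$ coming from $\operatorname{tr}(\boldsymbol{K_2}\boldsymbol{H}\boldsymbol{K_2})$ shifts the equilibrium only by $O(\max_i\sqrt{\alpha_i/\omega_i})$; this is where $\alpha^*$ enters.

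\textbf{Step 4: the main obstacle, stability of the $\boldsymbol{\tilde\theta}$ block.} We must show the $\boldsymbol{\tilde\theta}$ block is Hurwitz. With $\boldsymbol{\Gamma}=\boldsymbol{H}^{-1}$, the averaged gain matrix is not simply $\boldsymbol{K}$. The demodulation is componentwise, $\sin(\omega_it+(\boldsymbol\Gamma\boldsymbol\mu J)_i)$, so the averaged linear part has the form $-\epsilon\,\boldsymbol{D}\,\boldsymbol{H}$, where $\boldsymbol{D}$ collects the diagonal entries of $\boldsymbol{H}^{-1}$ times $\mu_i\alpha_i/2$, possibly mixed with off-diagonal contributions. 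Here Assumption~\ref{assumpH} is used. Since $\boldsymbol{H}$ is strictly diagonally dominant, I would
\begin{itemize}
\item[(i)] bound the deviation of $\operatorname{diag}(\boldsymbol{H}^{-1})$ from $\operatorname{diag}(H_{ii}^{-1})$ by a Neumann-series argument on $\boldsymbol{H}=\boldsymbol{\Delta}(I+\boldsymbol{\Delta}^{-1}\boldsymbol{E})$ with $\|\boldsymbol{\Delta}^{-1}\boldsymbol{E}\|_\infty\le\max_i\varepsilon_i<1$;
\item[(ii)] apply Gershgorin's theorem to $\boldsymbol{D}\boldsymbol{H}$. Its diagonal entries are approximately $\mu_i\alpha_i/2$, with a common sign. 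Its row off-diagonal sums are bounded by $\varepsilon_i$ times the diagonal, up to the Neumann correction. Hence all eigenvalues lie in the correct open half-plane, and with $\mu_i\alpha_i$ chosen equal, the rates are equal to the prescribed value up to $O(\varepsilon)$.
\end{itemize}
Should the sign of the product require it, the sign convention on $\boldsymbol\Gamma(0)$ stated after \eqref{System2} fixes it.

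\textbf{Step 5: conclusion.} The averaged system has an exponentially stable equilibrium within $O(\max_i\sqrt{\alpha_i/\omega_i})$ of the origin. Invoking \cite[Th.~10.4]{khalil2002nonlinear}, for $\omega>\omega^*$ there is a unique exponentially stable $\Pi$-periodic solution within $O(1/\omega)$ of it. Undoing \eqref{errorT} by the triangle inequality, as in the proof of Theorem~\ref{shanker_static_m}, gives the stated $O\big(\tfrac{1}{\omega}+\max_i\sqrt{\alpha_i/\omega_i}\big)$ neighborhood of $(\boldsymbol\theta^*,\boldsymbol{H}^{-1})$.
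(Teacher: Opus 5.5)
\textbf{Where you match the paper, and where the gap is.} Your overall structure (Steps 1--3 and 5) follows the paper's proof: error coordinates \eqref{error}, rescaling $\tau=\omega t$, averaging, a block Jacobian, \cite[Th.~10.4]{khalil2002nonlinear}, and the triangle inequality. The difference from the paper, and the genuine gap, is in Step 4.

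You rightly keep the true linearized $\boldsymbol{\tilde\theta}$-block $-\epsilon\boldsymbol{B}$, where $\boldsymbol{B}=\frac12\operatorname{diag}\big(\alpha_i(\boldsymbol{H}^{-1}\boldsymbol{\mu})_i\big)\boldsymbol{H}$. Its Gershgorin radii are indeed $\le\varepsilon_i|B_{ii}|$. But this gives Hurwitz only if every $B_{ii}=\frac{\alpha_i}{2}(\boldsymbol{H}^{-1}\boldsymbol{\mu})_iH_{ii}$ is positive, i.e.\ only if each $(\boldsymbol{H}^{-1}\boldsymbol{\mu})_i$ has the sign of $H_{ii}$.

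Your Neumann bound gives only $|(\boldsymbol{H}^{-1}\boldsymbol{\mu})_i-\mu_i/H_{ii}|\le\frac{\varepsilon}{1-\varepsilon}\max_j\mu_j/|H_{jj}|$, with $\varepsilon=\max_i\varepsilon_i$. This protects the sign only if $\frac{\varepsilon}{1-\varepsilon}\max_j\mu_j/|H_{jj}|<\min_j\mu_j/|H_{jj}|$. The condition $\varepsilon_i<1$ does not imply this, and the conclusion genuinely fails.

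\textbf{Counterexamples.} Take $\boldsymbol{H}=-\left[\begin{smallmatrix}1&0.5\\0.5&1\end{smallmatrix}\right]$, so $\varepsilon_i=0.5$, and $\boldsymbol{\mu}=[0.1,\,1]^T$. Then $\boldsymbol{H}^{-1}\boldsymbol{\mu}\approx[0.53,\,-1.27]^T$, so $\det\boldsymbol{B}<0$ and $-\epsilon\boldsymbol{B}$ has a positive eigenvalue.

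Equal $\mu_i$ do not rescue this for $n\ge 3$. The matrix $\boldsymbol{H}=-\left[\begin{smallmatrix}1&0.45&0.45\\0.45&1&-0.5\\0.45&-0.5&1\end{smallmatrix}\right]$ is negative definite and satisfies Assumption~\ref{assumpH} with $\varepsilon_i\le 0.95$. With $\mu_i\equiv\mu$ it gives $\boldsymbol{H}^{-1}\boldsymbol{\mu}\approx\mu\,[4.2,\,-5.8,\,-5.8]^T$, so again $\det\boldsymbol{B}<0$ and $-\epsilon\boldsymbol{B}$ is not Hurwitz.

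The sign convention on $\boldsymbol{\Gamma}(0)$ cannot help, because you linearize at $\boldsymbol{\Gamma}=\boldsymbol{H}^{-1}$ regardless of the initial condition.

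\textbf{What the paper does here.} The paper does not supply the missing step either. It simply states that under Assumption~\ref{assumpH} the averaged equation becomes the diagonal $-\frac{\epsilon}{2}\boldsymbol{K^\prime}\boldsymbol{\tilde\theta}_{av}$ with Jacobian block $-\boldsymbol{\rho}$. That is exact only for diagonal $\boldsymbol{H}$. Your Gershgorin route becomes a proof once Assumption~\ref{assumpH} is strengthened quantitatively, for instance by the inequality above. When all the ratios $\mu_j/|H_{jj}|$ coincide, that inequality reduces to $\varepsilon<\frac12$.

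\textbf{Two smaller points.} First, with $\omega_r=\omega\bar\omega_r$ the $\boldsymbol{\tilde\Gamma}$-equation in $\tau$ is $O(\bar\omega_r)$, not $O(\epsilon)$. The system is then not in the standard form required by Th.~10.4. Keep $\omega_r$ independent of $\omega$, as the paper does; its rescaled equation carries the factor $\frac{1}{\omega}\omega_r$.

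Second, by \eqref{ave3}--\eqref{finalMM}, the quadratic term $\frac12\boldsymbol{\tilde\theta}_{av}^T\boldsymbol{H}\boldsymbol{\tilde\theta}_{av}$ and the $\frac14\operatorname{tr}(\boldsymbol{K_2}\boldsymbol{H}\boldsymbol{K_2})$ term enter $\boldsymbol{\eta}_{av}$ and $J_{av}$ identically and cancel. Hence $(\boldsymbol{N}J)_{av}=\boldsymbol{H}$, the averaged $\boldsymbol{\tilde\Gamma}$-equation is exactly \eqref{AvgGamma}, and the averaged equilibrium is the origin itself. No residual Hessian bias has to be absorbed through $\alpha^*$.
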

\begin{proof}
	At the beginning, by the time-derivative of \eqref{error}, the error dynamics is
	\begin{subequations}
		\begin{align}
			\nonumber&	\boldsymbol{\dot{\tilde{\theta}}}=\boldsymbol{K_1}\sin \left(\boldsymbol{\scalebox{0.85}{${\Omega}$}} t+\boldsymbol{\phi}\right)\\
			\nonumber&\hspace{0cm}-\boldsymbol{K_2}\left(\operatorname{diag}(\boldsymbol{\scalebox{0.85}{${\Omega}$}})+ \operatorname{diag}((\boldsymbol{\tilde{\Gamma}}+\boldsymbol{H}^{-1})\boldsymbol{\mu}) \frac{dJ}{dt}+\operatorname{diag}(\frac{d \boldsymbol{\tilde{\Gamma}}}{dt}\boldsymbol{\mu}) J\right)\\
			&\hspace{1.5cm}\times\sin \left(\boldsymbol{\scalebox{0.85}{${\Omega}$}} t+\boldsymbol{\phi}\right)
			\label{error_sytem1}
			\\
			&	\boldsymbol{\dot{{\tilde{\Gamma}}}}=\omega_r\!\big(\boldsymbol{\tilde\Gamma}+\boldsymbol{H}^{-1}\big)
			- \omega_r\!\big(\boldsymbol{\tilde\Gamma}+\boldsymbol{H}^{-1}\big)\,
			{\boldsymbol{N}}(t)\,
			J\!\big(\boldsymbol{\tilde\Gamma}+\boldsymbol{H}^{-1}\big)
		\end{align}
		\label{sys32}
	\end{subequations}
	As regards, $	J(\boldsymbol{\tilde{\theta}})$ in \eqref{static-map-new2}, we have
	\begin{align}
		\nonumber\frac{d	J}{dt}&=\boldsymbol{\dot{\theta}}^T\boldsymbol{H }(\boldsymbol{	\theta}-\boldsymbol{	\theta^*})\\
		&=\sin^T \left({\boldsymbol{\scalebox{0.85}{${\Omega}$}}}t+\boldsymbol{\phi}\right)\boldsymbol{ K_1H} 	\left(\boldsymbol{{\tilde{\theta}}}-\boldsymbol{K_2} \cos \left({\boldsymbol{\scalebox{0.85}{${\Omega}$}}}t+\boldsymbol{\phi}\right)\right)
		\label{DJ}
	\end{align}
	and by substituting \eqref{DJ} into \eqref{sys32}
	\begin{subequations}
		\begin{align}
			\nonumber&\frac{	d\boldsymbol{{\tilde{\theta}}}}{dt}=\boldsymbol{K_1}\sin \left(\boldsymbol{\scalebox{0.85}{${\Omega}$}} t+\boldsymbol{\phi}\right)-\boldsymbol{K_2}\operatorname{diag}(\boldsymbol{\scalebox{0.85}{${\Omega}$}})\sin \left(\boldsymbol{\scalebox{0.85}{${\Omega}$}} t+\boldsymbol{\phi}\right)
			\\
			\nonumber&\hspace{0.75cm}-\boldsymbol{K_2}\operatorname{diag}((\boldsymbol{\tilde{\Gamma}}+\boldsymbol{H}^{-1})\boldsymbol{\mu})\sin^T \left(\boldsymbol{\scalebox{0.85}{${\Omega}$}} t+\boldsymbol{\phi}\right)\boldsymbol{K_1H} \boldsymbol{{\tilde{\theta}}}\\
				\nonumber	&\hspace{4cm}\times\sin \left(\boldsymbol{\scalebox{0.85}{${\Omega}$}} t+\boldsymbol{\phi}\right)\\
			\nonumber	&\hspace{0.75cm}+\boldsymbol{K_2}\operatorname{diag}((\boldsymbol{\tilde{\Gamma}}+\boldsymbol{H}^{-1})\boldsymbol{\mu}) \sin^T \left(\boldsymbol{\scalebox{0.85}{${\Omega}$}} t+\boldsymbol{\phi}\right)\boldsymbol{K_1}\boldsymbol{H K_2} \\
			\nonumber	&\hspace{4cm}\times\cos \left({\boldsymbol{\scalebox{0.85}{${\Omega}$}}}t+\boldsymbol{\phi}\right) \sin \left({\boldsymbol{\scalebox{0.85}{${\Omega}$}}}t+\boldsymbol{\phi}\right)
			\\
			&\hspace{0.75cm}-\boldsymbol{K_2}\operatorname{diag}(\frac{d \boldsymbol{\tilde{\Gamma}}}{dt} \boldsymbol{\mu})J\sin \left(\boldsymbol{\scalebox{0.85}{${\Omega}$}} t+\boldsymbol{\phi}\right)
			\label{error_sytem1111}\\
			&	\frac{d\boldsymbol{{{\tilde{\Gamma}}}}}{dt}=\omega_r\!\big(\boldsymbol{\tilde\Gamma}+\boldsymbol{H}^{-1}\big)
			- \omega_r\!\big(\boldsymbol{\tilde\Gamma}+\boldsymbol{H}^{-1}\big)\,
			{\boldsymbol{N}}(t)\,
			J\!\big(\boldsymbol{\tilde\Gamma}+\boldsymbol{H}^{-1}\big)
		\end{align}
	\end{subequations}
			They are chosen $\boldsymbol{\scalebox{0.85}{${\Omega}$}}=\omega \overline{\boldsymbol{\scalebox{0.85}{${\Omega}$}}}$ where $\overline{\boldsymbol{\scalebox{0.85}{${\Omega}$}}}=\left(\overline{\omega}_1,..., \overline{\omega}_n\right)^T$, and $\boldsymbol{K_2}=\frac{1}{\omega}\operatorname{diag}(\overline{\boldsymbol{\scalebox{0.85}{${\Omega}$}}})^{-1}\boldsymbol{K_1}$.
			Then, in the new time-scale $\tau=\omega t$, we have 
			\begin{subequations}
				\allowdisplaybreaks
				\begin{align}
					\nonumber&\frac{	d\boldsymbol{{\tilde{\theta}}}}{d\tau}=\frac{1}{\omega}\Big(\boldsymbol{K_1}\sin \left(\overline{\boldsymbol{\scalebox{0.85}{${\Omega}$}}} \tau+\boldsymbol{\phi}\right)-\boldsymbol{K_1}	\sin
					\left(\overline{\boldsymbol{\scalebox{0.85}{${\Omega}$}}} \tau+\boldsymbol{\phi}\right)\Big)\\
					\nonumber&\hspace{0.6cm}-\frac{1}{\omega}\Big(\overline{\boldsymbol{K_1}}\operatorname{diag}(\overline{\boldsymbol{\scalebox{0.85}{${\Omega}$}}})^{-1}\operatorname{diag}(\boldsymbol{(\boldsymbol{\tilde{\Gamma}}+\boldsymbol{H}^{-1})\mu})\sin^T \left(\overline{\boldsymbol{\scalebox{0.85}{${\Omega}$}}} \tau+\boldsymbol{\phi}\right)\\
					\nonumber&\hspace{3.7cm}\times	\boldsymbol{H }\overline{\boldsymbol{K_1}} \boldsymbol{{\tilde{\theta}}}\sin \left(\overline{\boldsymbol{\scalebox{0.85}{${\Omega}$}}} \tau+\boldsymbol{\phi}\right)\Big)\\
					\nonumber&\hspace{0.6cm}+\frac{1}{\omega^3}\Big(\boldsymbol{K_1 }\operatorname{diag}(\boldsymbol{(\boldsymbol{\tilde{\Gamma}}+\boldsymbol{H}^{-1})\mu})\sin^T \left(\overline{\boldsymbol{\scalebox{0.85}{${\Omega}$}}} \tau+\boldsymbol{\phi}\right)\overline{\boldsymbol{K_1}}\boldsymbol{H}\overline{\boldsymbol{K_1}}\\
					\nonumber &\hspace{2cm}\times\operatorname{diag}(\overline{\boldsymbol{\scalebox{0.85}{${\Omega}$}}})^{-1}\cos \left(\overline{\boldsymbol{\scalebox{0.85}{${\Omega}$}}}\tau+\boldsymbol{\phi}\right)\sin \left(\overline{\boldsymbol{\scalebox{0.85}{${\Omega}$}}}\tau+\boldsymbol{\phi}\right)\Big)
					\\
					&\hspace{0.6cm}-\frac{1}{\omega^3}\Big(\boldsymbol{K_1}\operatorname{diag}(\overline{\boldsymbol{\scalebox{0.85}{${\Omega}$}}})^{-1}\operatorname{diag}(\boldsymbol{G_\gamma}\boldsymbol{\mu})J\sin \left(\overline{\boldsymbol{\scalebox{0.85}{${\Omega}$}} }\tau+\boldsymbol{\phi}\right)\Big)\\
					\label{error_sytem13}
					\nonumber&\frac{d\boldsymbol{{{\tilde{\Gamma}}}}}{d\tau}=\frac{1}{\omega}\left( \omega_r\!\big(\boldsymbol{\tilde\Gamma}+\boldsymbol{H}^{-1}\big)
					\!- \omega_r\!\big(\boldsymbol{\tilde\Gamma}+\boldsymbol{H}^{-1}\big)\,
					\overline{\boldsymbol{N}}(\tau)\,
					\!J\!\big(\boldsymbol{\tilde\Gamma}+\boldsymbol{H}^{-1}\big)  \!\!  \right)\\
					&\hspace{0.6cm}	=\frac{1}{\omega}\boldsymbol{G_\gamma}
				\end{align}
				\label{sys333}
			\end{subequations}
			where $\overline{\boldsymbol{K_1}}=\operatorname{diag}\left(\sqrt{\alpha_1\overline{\omega}}_1,..., \sqrt{\alpha_n\overline{\omega}}_n\right)$ and $\overline{\boldsymbol{{N}}}(\tau)=\boldsymbol{N}(\tau/\omega)$. If $\epsilon=\frac{1}{\omega}$, 
			for the sufficiently large $\omega$, 
			\begin{subequations}
				\begin{align}
					\allowdisplaybreaks
					&\frac{	d\boldsymbol{{\tilde{\theta}}}}{d\tau}=\epsilon \boldsymbol{f_\theta}(\boldsymbol{\tilde{\theta}},\boldsymbol{\tilde{\Gamma}},\epsilon,\tau)
					\label{error_sytem133}
					\\
					&\frac{d\boldsymbol{{{\tilde{\Gamma}}}}}{d\tau}=\epsilon \boldsymbol{f_\gamma}(\boldsymbol{\tilde{\theta}},\boldsymbol{\tilde{\Gamma}},\epsilon,\tau)
				\end{align}
				\label{sys3333}
			\end{subequations}
			in which
			\begin{subequations}
				\begin{align}
					\nonumber&\boldsymbol{f_\theta}(\boldsymbol{\tilde{\theta}},\boldsymbol{\tilde{\Gamma}},\epsilon,\tau)=	-\operatorname{\operatorname{diag}}(\overline{\boldsymbol{\scalebox{0.85}{${\Omega}$}}}^{-1})
					\overline{\boldsymbol{K_1}}\operatorname{diag}(\boldsymbol{(\boldsymbol{\tilde{\Gamma}}+\boldsymbol{H}^{-1})\mu})\\
					&\hspace{0.5cm}\times\sin^T\!\! \left(\overline{\boldsymbol{\scalebox{0.85}{${\Omega}$}}} \tau+\boldsymbol{\phi}\right)	\overline{\boldsymbol{K_1}}\boldsymbol{H} \boldsymbol{{\tilde{\theta}}}\sin \left(\overline{\boldsymbol{\scalebox{0.85}{${\Omega}$}}} \tau+\boldsymbol{\phi}\right)+\epsilon^2 \boldsymbol{G}_\theta(\boldsymbol{\tilde{\theta}},\boldsymbol{\tilde{\Gamma}},\epsilon,\tau)
					\\[4pt]&
					\nonumber	\boldsymbol{f_\gamma}(\boldsymbol{\tilde\theta},\boldsymbol{\tilde\Gamma},\epsilon,\tau)
					= \omega_r\!\big(\boldsymbol{\tilde\Gamma}+\boldsymbol{H}^{-1}\big)
					- \omega_r\!\big(\boldsymbol{\tilde\Gamma}+\boldsymbol{H}^{-1}\big)\,
					\overline{\boldsymbol{N}}(\tau)\,\\
			&	\hspace{5.35cm}\times
					J\!\big(\boldsymbol{\tilde\Gamma}+\boldsymbol{H}^{-1}\big)
				\end{align}
			\end{subequations}
				where $\boldsymbol{G_\theta}$ aggregates the higher-order terms proportional to $\epsilon^3$
				in the $\boldsymbol{{\tilde{\theta}}}$-dynamics. The system \eqref{sys3333} is $\Pi$-periodic in $\tau$ and in the standard form of the averaging theorem. The averaging period is also given
						\begin{align}
							\Pi=2\pi\times LCM\{\frac{1}{\overline{\omega}_1},...,\frac{1}{\overline{\omega}_n}\},\hspace{0.75cm} i\in\{1,2,...,n\}
						\end{align}
						The average of functions $\boldsymbol{f_{\theta,av}}(\boldsymbol{\tilde{\theta}},\boldsymbol{\tilde{\Gamma}})$ and  $\boldsymbol{f_{\gamma,av}}(\boldsymbol{\tilde{\theta}},\boldsymbol{\tilde{\Gamma}})$ equal
						\begin{subequations}
							\begin{align}
								&	\nonumber\boldsymbol{f_{\theta,av}}(\boldsymbol{\tilde{\theta}},\boldsymbol{\tilde{\Gamma}})=\frac{1}{\Pi}\int_{0}^{\Pi}\boldsymbol{f_\theta}(\boldsymbol{\tilde{\theta}},\boldsymbol{\tilde{\Gamma}},0,\tau)d\tau\\
								&\hspace{1cm}	=\!\!\frac{-1}{2}\operatorname{diag}(\overline{\boldsymbol{\scalebox{0.85}{${\Omega}$}}}^{-1})\overline{\boldsymbol{K}}_{\boldsymbol{1}}\operatorname{diag}(\boldsymbol{(\boldsymbol{\tilde{\Gamma}}_{av}+\boldsymbol{H}^{-1})\mu})	\overline{\boldsymbol{K}}_{\boldsymbol{1}}\boldsymbol{H} \boldsymbol{{\tilde{\theta}}_{av}}\\
								&\nonumber\boldsymbol{f_{\gamma,av}}(\boldsymbol{\tilde{\theta}},\boldsymbol{\tilde{\Gamma}})=\frac{1}{\Pi}\int_{0}^{\Pi}\boldsymbol{f_\gamma}(\boldsymbol{\tilde{\theta}},\boldsymbol{\tilde{\Gamma}},0,\tau)d\tau\\
								&\hspace{1cm}	=-\omega_r\boldsymbol{\tilde\Gamma_{av}}\boldsymbol{H}\boldsymbol{\tilde\Gamma_{av}}-\omega_r\boldsymbol{\tilde\Gamma_{av}}
								\label{AvgGamma}
							\end{align}
							\label{avg:teta:gama}
						\end{subequations}
						Under Assumption \ref{assumpH}, the corresponding averaged system is given by
						\begin{subequations}
							\begin{align}
								&
								\boldsymbol{	\dot{\tilde{\theta}}_{av}}=	-\frac{\epsilon}{2}\boldsymbol{K^\prime} \boldsymbol{{\tilde{\theta}}}_{av}\\
								&		\boldsymbol{	\dot{\tilde{\Gamma}}_{av}}=-\epsilon\omega_r\boldsymbol{\tilde\Gamma_{av}}\boldsymbol{H}\boldsymbol{\tilde\Gamma_{av}}-\epsilon\omega_r\boldsymbol{\tilde\Gamma_{av}}
							\end{align}
							\label{avgsystemzm1}
						\end{subequations}
						where $\boldsymbol{K^\prime}
						=\operatorname{diag}
					(	\alpha_i
						\left(
						H_{ii}\tilde{\Gamma}_{i,\mathrm{av}}+\mu_i
						\right))$.
						The right-hand sides of \eqref{avgsystemzm1} are equated to zero to determine equilibrium conditions. Taking into account  initial conditions $\boldsymbol{\Gamma}(0)$ for convex map and $\overline{\omega}_r> 0$, the average equilibriums give
							\begin{subequations}
								\begin{align}
									&\boldsymbol{{\tilde{\theta}_{av}}}=0\\
									&
									\!\big(\boldsymbol{\tilde\Gamma_{av}}+\boldsymbol{H}^{-1}\big)\boldsymbol{H}=\boldsymbol{I}
								\end{align}
							\end{subequations}
							then the equilibrium point is $( \boldsymbol{{\tilde{\theta}_{av}}},\boldsymbol{\tilde\Gamma_{av}}) = (0,0)$, and
							the Jacobian matrix of the average system \eqref{avgsystemzm1} is
								\begin{align}
									\boldsymbol{A} ={\epsilon}\begin{bmatrix}
										-\boldsymbol{\rho} & \boldsymbol{0_{n\times n^2}} \\[1em]
										\boldsymbol{0_{n^2\times n}} &- \omega_r\boldsymbol{I_{n^2}}
									\end{bmatrix}
								\end{align}
								where $\boldsymbol{\rho}=\frac{1}{2}\operatorname{diag}(\alpha_1\mu_1,..., \alpha_n\mu_n)\in \mathbb{R}^{n\times n}$, so $\boldsymbol{A}$ is Hurwitz. The
								equilibrium of the average system \eqref{avgsystemzm1} is locally exponentially stable. Therefore, the application of the averaging theorem \cite{khalil2002nonlinear}[ Th. 10. 4] proves the existence of a unique exponentially
								stable periodic solution   $( \boldsymbol{{\tilde{\theta}}}, \boldsymbol{\tilde{\Gamma}})$ of the original system, with period $\Pi$, which remains within $O\!\left( \frac{1}{\omega} \right)$-neighborhood of the equilibrium of the averaged system, for sufficiently large $\omega$.
								Thus, regarding to \eqref{error}, the solutions $(\boldsymbol{\theta},\boldsymbol{\Gamma})$ are locally exponentially attracted to a periodic solution within
								$O\!\left(\frac{1}{\omega}+\underset{i}{\max}\sqrt{\frac{\alpha_i}{\omega_i}}\right)$-neighborhood 
								of $(\boldsymbol{\theta}^*,\boldsymbol{H}^{-1})$.
							\end{proof}
							\section{Simulation}\label{Sim}
							To illustrate the results and highlight the difference between the bounded ES  scheme \cite{scheinker_extremum_2014} and the proposed Newton-based Bounded ES  scheme, the following multivariable cost function  is considered
							\begin{align}
								y=J(\boldsymbol{\theta})=J(\boldsymbol{\theta^*})+\frac{1}{2}(\boldsymbol{\theta}-\boldsymbol{\theta^*})^T \begin{bmatrix} 3&1 \\1& 4.5\end{bmatrix}(\boldsymbol{\theta}-\boldsymbol{\theta^*})
							\end{align}
							where $\boldsymbol{	\theta^*} = \begin{bmatrix} 2 ,4 \end{bmatrix}^T$. 
							For a fair comparison, all parameters are chosen the same except the gain matrix $\boldsymbol{{K_1}}$. We should select $\boldsymbol{{K_{1n}}}$ and $\boldsymbol{{K_{1g}}}$ such that $\boldsymbol{{K_{1g}}}=- \boldsymbol{{K_{1n}}}\boldsymbol{\Gamma}(0)$. Here, $\boldsymbol{{K_{1g}}}$ and $\boldsymbol{{K_{1n}}}$ represent the gain matrices for the gradient-based and Newton-based approaches, respectively.
							\begin{figure}[bt]
								\centering
								\includegraphics[scale=0.31]{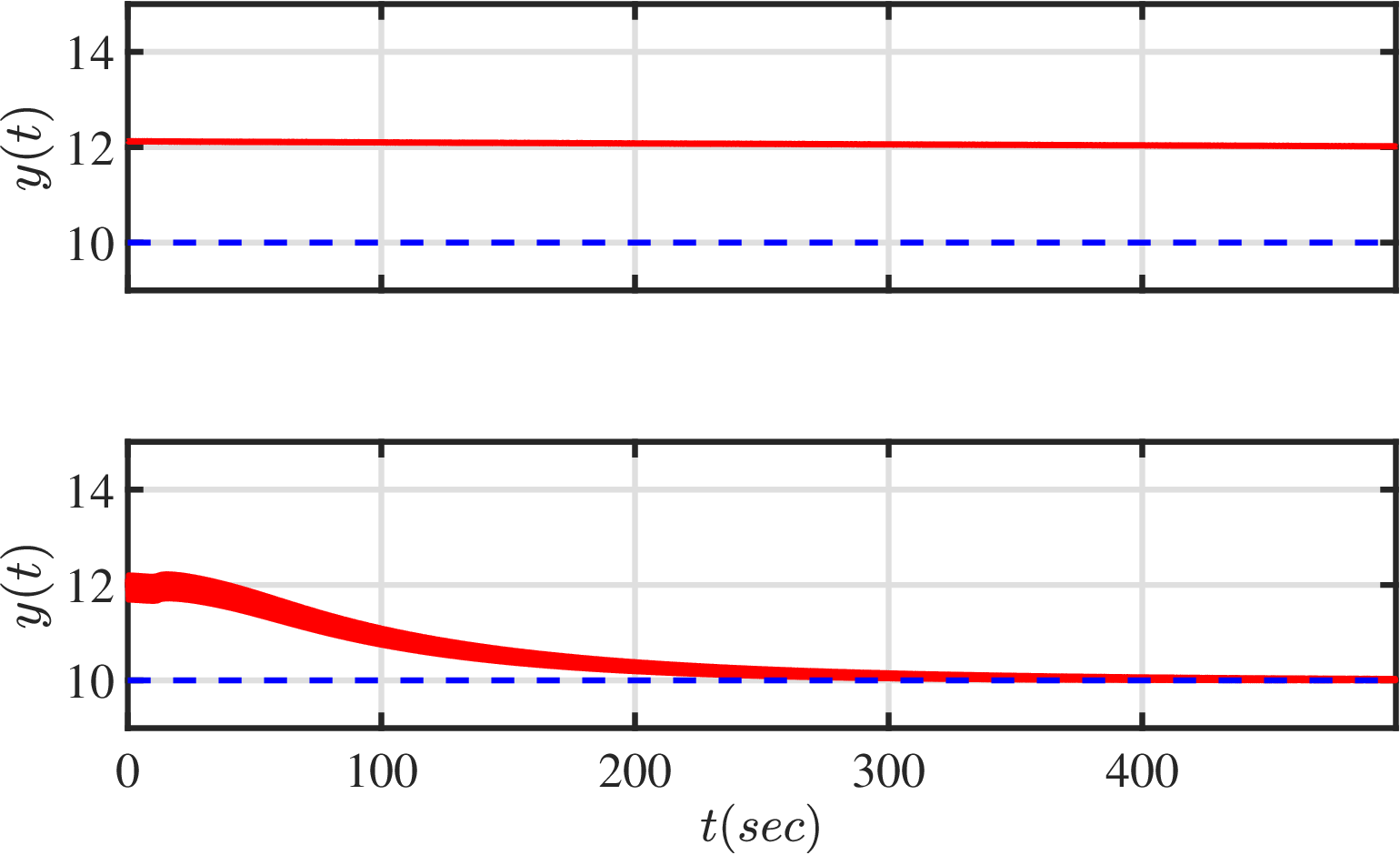}
								\caption{\fontsize{8}{10}\selectfont Two estimates of the maximum versus time: (above) Gradient-based and (below) Newton-based.}
								\label{figys}
							\end{figure}
							\begin{figure}[bt]
								\centering
								\includegraphics[scale=0.31]{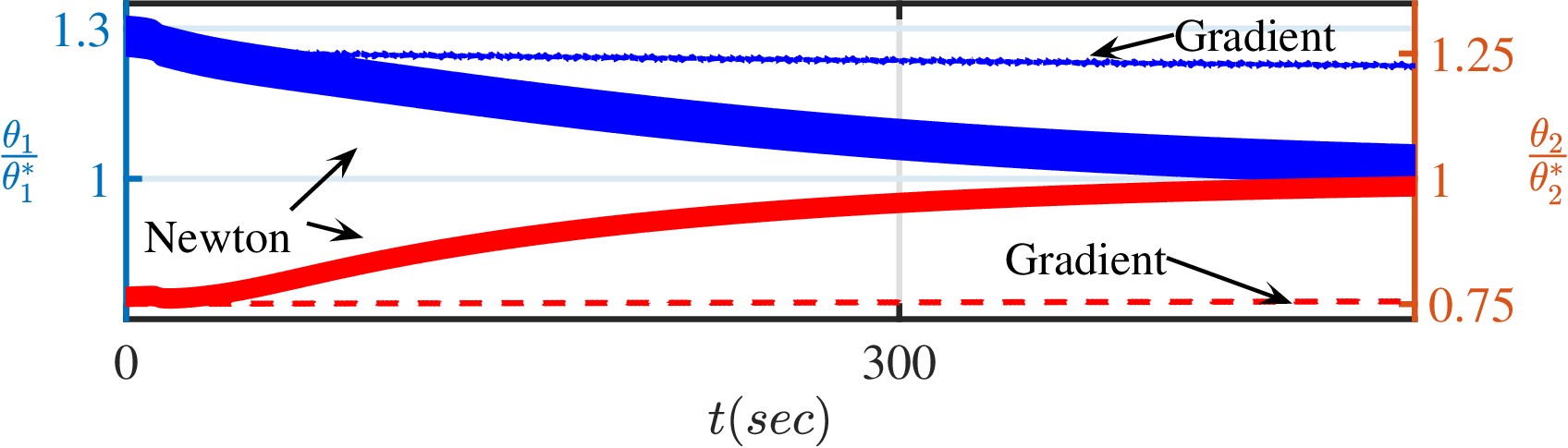}
								\caption{\fontsize{8}{10}\selectfont Parameter estimates. Gradient-based (blue)and Newton-based  (red) time responses.}
								\label{figFaws}
							\end{figure}
							\begin{figure}[bt]
								\centering
								\includegraphics[scale=0.3]{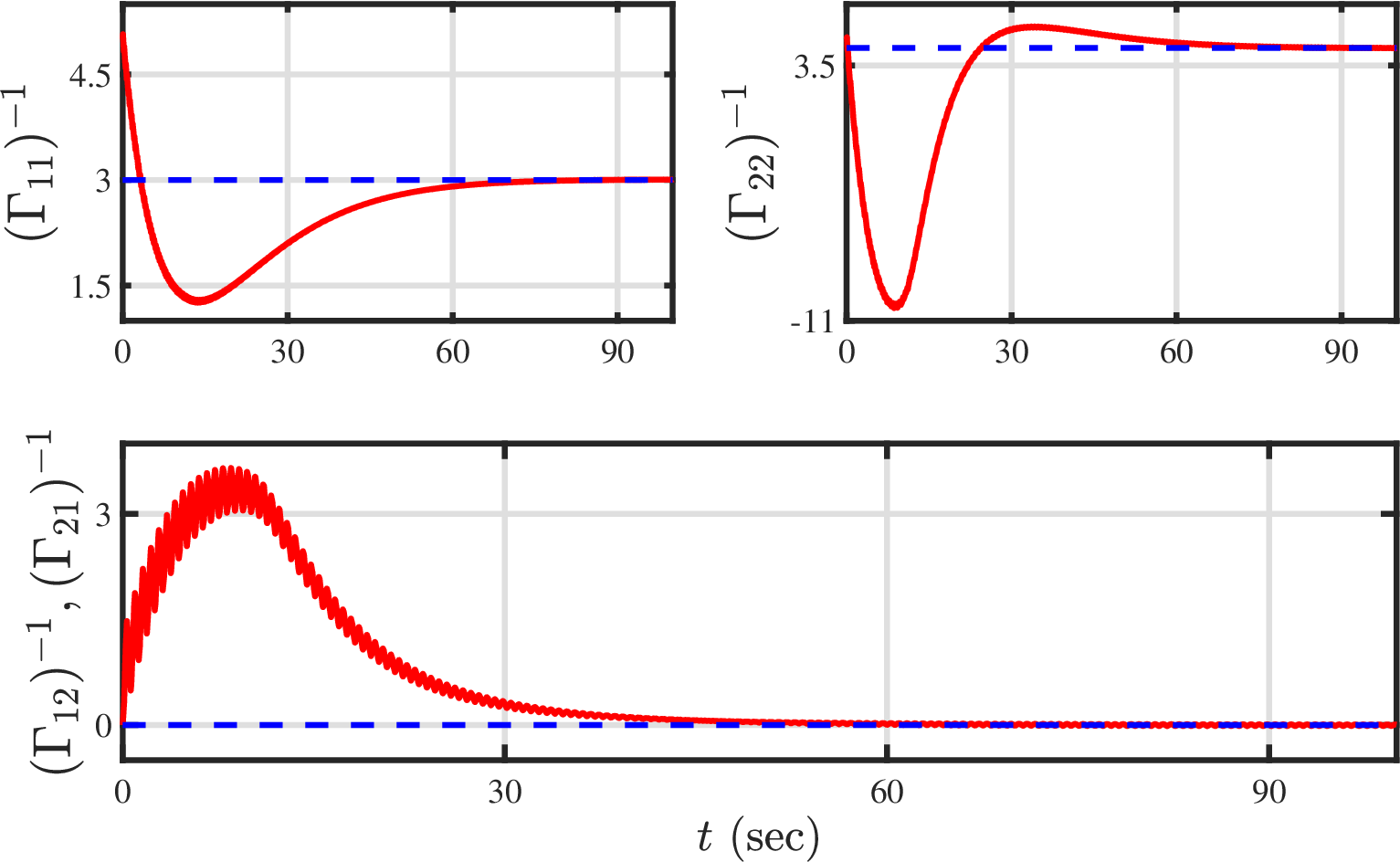}
								\caption{\fontsize{8}{10}\selectfont Time evolution of the Hessian matrix estimator $\boldsymbol{\Gamma^{-1}}$. The true value of $\boldsymbol{H}$ is reached in 60 seconds, after which the Newton-based method follows a direct path.}
								\label{figGamma}
							\end{figure}
							We perform our tests with the following parameters: 
							$\omega=10, \overline{\boldsymbol{\scalebox{0.85}{${\Omega}$}}}=\begin{bmatrix}
								2, 3
							\end{bmatrix}^T, \boldsymbol{a}= \begin{bmatrix} 0.1 , 0.1 \end{bmatrix}^T, \boldsymbol{\mu}= \begin{bmatrix} 0.2 , 0.2 \end{bmatrix}^T, \boldsymbol{	\Gamma}^{-1}(0)= \begin{bmatrix}
								0.1,-0.001;-0.001,0.02
							\end{bmatrix}, \boldsymbol{\theta}(0)= \begin{bmatrix} 2.5 , 3 \end{bmatrix}^T$.
							
							Fig. \ref{figys} illustrates the performance and convergence characteristics of two methods in achieving the maximum value. Evolution of the parameters is depicted in Fig. \ref{figFaws}. As illustrated in Fig. \ref{figys} and Fig. \ref{figFaws}, the Newton-based approach attains the maximum value within 300 seconds, whereas the gradient-based method lags considerably behind, emphasizing the faster convergence rate of the Newton-based technique. Furthermore, Fig. \ref{figGamma} depicts the convergence of the Hessian matrix to its true value, underscoring the reliability and efficiency of the Newton-based technique.
							\section{CONCLUSION}\label{CoN}
							This paper proposed a multivariable Newton-based extremum-seeking scheme with bounded update rates. The method combines bounded ES with an online estimate of the inverse Hessian to compensate for the effect of unknown local curvature. A suitable demodulation matrix was developed for Hessian estimation, together with a Riccati-based adaptation law for its inverse.
							
							Local exponential stability was established for both the conventional bounded ES and the proposed Newton-based scheme using averaging analysis. For the Newton-based scheme, the stability result requires a weak Hessian-coupling condition as a sufficient assumption for the stability analysis. This condition is introduced as a sufficient technical assumption for the stability analysis and is not required for the construction of the Hessian estimator itself. Under sufficiently large perturbation frequencies and small perturbation amplitudes, the system converges exponentially to a neighborhood of the extremum and the inverse Hessian.
							
							Simulation results demonstrate the effectiveness of the Hessian estimation and Newton-based adaptation, showing faster convergence than gradient-based bounded ES while maintaining bounded parameter-update rates. Future work will focus on relaxing the Hessian-coupling condition and extending the method to more strongly coupled objective maps and higher-order derivative extremum seeking.
							\bibliographystyle{ieeetran}
							\bibliography{dispaperRef}
						\end{document}